\newcommand{\cmark}{\ding{51}}%
\newcommand{\xmark}{\ding{55}}%
\newcommand\doubleactivetilde{~~} 
\NewDocumentEnvironment{game}{mmooob}
 {

  \seq_set_split:Nnn \l_cfr_game_body_seq { \\ } { #6 }
  \seq_pop_right:NN \l_cfr_game_body_seq \l_tmpa_tl
  \tl_if_empty:NF \l_tmpa_tl { \seq_put_right:NV \l_cfr_game_body_seq \l_tmpa_tl }
  \seq_pop_left:NN \l_cfr_game_body_seq \l_cfr_game_first_tl
  \seq_set_split:NnV \l_cfr_game_first_seq { & } \l_cfr_game_first_tl
  \seq_pop_left:NN \l_cfr_game_first_seq \l_tmpa_tl
  \__cfr_game_equalize_columns:
  \begin{tabular}{@{}c@{}r|*{#2}{w{c}{\l__cfr_game_dim}|}}
  \IfValueT{#4}{\multicolumn{2}{@{}c}{} & \multicolumn{#2}{c}{#4} \\}
  \multicolumn{2}{c}{} \seq_map_function:NN \l_cfr_game_first_seq \__cfr_game_first:n \\
  \exp_args:No \hhline{\doubleactivetilde*{#2}{|-}|}
  \IfValueTF{#3}{\multirow{#1}{*}{#3\quad}}{\multicolumn{1}{@{}c@{}}{}}
  & \seq_use:Nn \l_cfr_game_body_seq
   { \\ \exp_args:No \hhline{\doubleactivetilde*{#2}{|-}|} & } \\
  \exp_args:No \hhline{\doubleactivetilde*{#2}{|-}|}
  \IfValueT{#5}
   {
    \multicolumn{2}{c}{\rule{0pt}{1.25\normalbaselineskip}} &
    \multicolumn{#2}{c}{\makebox[0pt]{#5}} \\
   }
  \end{tabular}
}{}
\NewDocumentEnvironment{game_smaller}{mmooob}
 {

  \seq_set_split:Nnn \l_cfr_game_body_seq { \\ } { #6 }
  \seq_pop_right:NN \l_cfr_game_body_seq \l_tmpa_tl
  \tl_if_empty:NF \l_tmpa_tl { \seq_put_right:NV \l_cfr_game_body_seq \l_tmpa_tl }
  \seq_pop_left:NN \l_cfr_game_body_seq \l_cfr_game_first_tl
  \seq_set_split:NnV \l_cfr_game_first_seq { & } \l_cfr_game_first_tl
  \seq_pop_left:NN \l_cfr_game_first_seq \l_tmpa_tl
  \begin{tabular}{@{}c@{}r*{#2}{|c}|}
  \IfValueT{#4}{\multicolumn{2}{@{}c}{} & \multicolumn{#2}{c}{#4} \\}
  \multicolumn{2}{c}{} \seq_map_function:NN \l_cfr_game_first_seq \__cfr_game_first:n \\
  \exp_args:No \hhline{\doubleactivetilde*{#2}{|-}|}
  \IfValueTF{#3}{\multirow{#1}{*}{#3\quad}}{\multicolumn{1}{@{}c@{}}{}}
  & \seq_use:Nn \l_cfr_game_body_seq
   { \\ \exp_args:No \hhline{\doubleactivetilde*{#2}{|-}|} & } \\
  \exp_args:No \hhline{\doubleactivetilde*{#2}{|-}|}
  \IfValueT{#5}
   {
    \multicolumn{2}{c}{\rule{0pt}{1.25\normalbaselineskip}} &
    \multicolumn{#2}{c}{\makebox[0pt]{#5}} \\
   }
  \end{tabular}
}{}
\theoremstyle{definition}
\newtheorem{definition}{Definition}[section]
\newcommand{\Tra}{^{\sf T}} 
\newcommand{\Act}{{\mathcal{A}}}
\newcommand{\States}{\mathcal{S}}
\newcommand{\expect}{\mathbb{E}}
\newcommand{\amTFT}{\texttt{amTFT}}
\newcommand{\Histories}{\mathcal{H}}
\newcommand{\welfareUtil}{w^{\mathrm{Util}}}
\newcommand{\welfareIneq}{w^{\mathrm{IA}}}
\newcommand{\expit}{\mathrm{expit}}
\newcommand{\mi}{_{-i}}
\newcommand{\Pihat}{\Pi}
\newcommand{\WelfareFunctions}{\mathcal{W}}
\newcommand{\policyProfile}{\pi}
\newcommand{\ii}{^{i}_i}
\newcommand{\imi}{^{i}_{-i}}
\newcommand{\Vbar}{\overline{V}}
\newcommand{\Vexploit}{V^{\mathrm{exploitation}}}
\newcommand{\Vrobust}{V^{\mathrm{robustness}}}
\newcommand{\Norms}{\mathcal{N}}
\DeclareMathOperator*{\argmax}{arg\,max}
\DeclareMathOperator*{\argmin}{arg\,min}
\theoremstyle{plain}
\newtheorem{prop}{Proposition}
\title{Normative Disagreement as a Challenge for Cooperative AI}
\author{
Julian Stastny \\
University of Tuebingen\\
\texttt{julianstastny@gmail.com} \\

\And Maxime Rich\'{e} \\
Center on Long-Term Risk\\
\texttt{maxime.riche@longtermrisk.org} \\

\And {Alexander Lyzhov}\thanks{Now at New York University} \\
Center on Long-Term Risk \\
\texttt{alex.grig.lyzhov@gmail.com} \\

\And Johannes Treutlein \\
Vector Institute\\
University of Toronto\\
\texttt{treutlein@cs.toronto.edu} \\

\And Allan Dafoe \\
DeepMind\\
\texttt{allandafoe@deepmind.com}\\

\And Jesse Clifton \\
Center on Long-Term Risk\\
\texttt{jesse.clifton@longtermrisk.org} \\
}
\begin{document}

\maketitle

\begin{abstract} 
Cooperation in settings where agents have both common and conflicting interests (mixed-motive environments) has recently received considerable attention in multi-agent learning. However, the mixed-motive environments typically studied have a single cooperative outcome on which all agents can agree. Many real-world multi-agent environments are instead bargaining problems (BPs): they have several Pareto-optimal payoff profiles over which agents have conflicting preferences. 
We argue that typical cooperation-inducing learning algorithms fail to cooperate in BPs when there is room for \textit{normative disagreement} resulting in the existence of multiple competing cooperative equilibria, and illustrate this problem empirically.
To remedy the issue, we introduce the notion of \textit{norm-adaptive} policies. Norm-adaptive
policies are capable of behaving according to different norms in different circumstances, 
creating opportunities for resolving normative disagreement.  
We develop a class of norm-adaptive policies and show in experiments that these
significantly increase cooperation. 
However, norm-adaptiveness cannot address residual bargaining failure arising from a fundamental tradeoff between 
exploitability and cooperative robustness.  
\end{abstract}

\section{Introduction}
\label{sec:intro}

Multi-agent contexts often exhibit opportunities for cooperation: situations where joint action can lead to mutual benefits \citep{dafoe2020open}.
Individuals can engage in mutually beneficial trade; nation-states can enter into treaties instead of going to war; disputants can settle out of court rather than engaging in costly litigation. 
But a hurdle common to each of these examples is that
the agents will disagree about their ideal agreement.
Even if agreements benefit all parties relative to the status 
quo, different agreements will benefit different parties to different 
degrees. These circumstances can be called \textit{bargaining problems} \citep{schelling1956essay}.

\par As AI systems are deployed to act on behalf of humans in more real-world circumstances, they will need to be able 
to act effectively in 
bargaining problems — from commercial negotiations in the 
nearer-term (e.g., \citet{chakraborty2020automated}) to high-stakes strategic decision-making in 
the longer-term 
\citep{geist2018might}. 
Moreover, 
agents may be trained independently and offline before 
interacting with one another in the world. This raises concerns about
future AI systems following incompatible norms for arriving at solutions to 
bargaining problems, analogously to disagreements about fairness which 
create hurdles to international cooperation on critical issues 
such as climate policy \citep{albin2001justice,ringius2002burden}. 

\par
Our contributions 
are as follows. We
introduce a taxonomy of cooperation games, including
bargaining problems
(Section \ref{sec:taxonomy}). 
We relate their difficulty 
to the degree of 
\textit{normative disagreement}, i.e., 
differences over principles for selecting
among mutually beneficial outcomes, 
which we formalize in terms of 
\textit{welfare functions}.
Normative disagreement does
not arise in purely cooperative games or simple 
sequential social dilemmas \citep{leibo2017multi}, 
but
is an important obstacle for cooperation in what we call \textit{asymmetric} 
bargaining problems. Following this, we introduce the notion of 
\textit{norm-adaptive} policies -- policies which can play according 
to different norms depending on the circumstances. 
In several multi-agent learning environments we highlight the difficulty of 
bargaining between norm-unadaptive policies
(Section \ref{sec:simple}).  
We then contrast this with
a class of norm-adaptive policies
(Section \ref{sec:robustness-training}) based
on \citet{lerer2017maintaining}'s approximate Markov
tit-for-tat algorithm. 
We show that 
this
improves performance
in 
bargaining problems. 
However, there remain limitations, 
most fundamentally a tradeoff between exploitability and the robustness of cooperation. 

\section{Related work}

\begin{wrapfigure}{r}{0.5\textwidth}
    \vspace{-15mm}

    \begin{tikzpicture}[scale=0.6, every node/.style={scale=0.6}]
    	\begin{scope} [fill opacity = .2]
        \draw[fill=blue, draw = black] (-1.3, 0) circle (4);
        \draw[fill=orange, draw = black] (1.3, 0) circle (4);
        \draw[fill=blue, draw = black] (-1.8 ,-1.7) circle (2.2);
        \draw[fill=red, draw = black] (0, 0.0) circle (2.5);
        \draw[fill=red, draw = black] (0, 0.0) circle (1.5);
    
        \node at (0, 1.9) [align=center, opacity=1]{\large Bargaining};
        \node at (0, 0) [align=center, opacity=1]{Asymmetric \\ Bargaining};
        \node at (-4, 1) [align=center, opacity=1]{\large Mixed- \\ \large motive};
        \node at (4, 0) [align=center, opacity=1]{\large Coordination};
        \node at (-3, -2) [align=center, opacity=1]{\large SSDs};
    
        \end{scope}
        
    
    \end{tikzpicture}
    \caption{Venn diagram of cooperation problems. 
    }
    \label{fig:venn-diagram}
    \vspace{-5mm}

\end{wrapfigure}
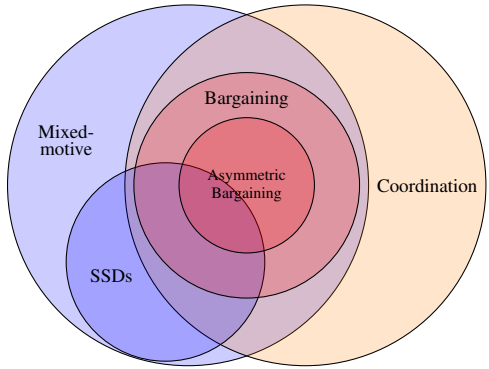

The field of multi-agent learning (MAL) has
recently
paid considerable attention to problems of
cooperation in mixed-motive games, in which 
agents have conflicting preferences.
Much of
this work has been focused on
\textit{sequential social dilemmas (SSDs)} (e.g., 
\citealt{peysakhovich2017consequentialist,lerer2017maintaining,eccles2019learning}).
The classic example of a social dilemma
is the Prisoner's Dilemma, and the SSDs studied
in this literature are similar to the Prisoner's Dilemma
in that there is a single salient notion of ``cooperation''. 
This means that it is relatively easy for actors
to coordinate in their selection of policies to   
\vspace*{\parskip}
deploy in these settings.\\
\citet{cao2018emergent} look at negotiation between deep reinforcement learners, but not
between independently trained agents.
Several authors have recently investigated
the 
board game Diplomacy \citep{paquette2019no,NEURIPS2020_d1419302,gray2021humanlevel} which contains implicit bargaining problems amongst players. 
Bargaining
problems are also investigated in older MAL literature 
(e.g., \citealt{crandall2011learning}) as 
well as literature on automated negotiation
(e.g., \citealt{kraus2001strategic,baarslag2013evaluating}),
but also not between independently trained agents.
Considerable work has gone into understanding 
the emergence of norms in both human \citep{bendor2001evolution,boyd2009culture}
and artificial societies \citep{walker1995understanding,shoham1997emergence,sen2007emergence}. 
Especially relevant are empirical studies of bargaining across cultural contexts \citep{henrich2001search}. 
There is also recent multi-agent reinforcement learning
work on 
norms
\citep{hadfield2019legible,lerer2019learning,koster2020model} is also relevant here as bargaining problems can be understood as settings in which there are multiple efficient but incompatible norms.
However, much less attention has been paid in these literatures to how agents with different norms
are or aren't able to overcome normative disagreement. 

\par There are large
game-theoretic literatures on bargaining (for a review see \cite{muthoo2001economics}). 
This includes a long tradition of work on
\textit{cooperative
bargaining solutions}, which tries to 
establish normative principles for 
deciding among mutually beneficial 
agreements \citep{thomson1994cooperative}. 
We will draw on this work in our
discussion of normative (dis)agreement below.

\par Lastly, the class of norm-adaptive policies we develop in
Section \ref{sec:robustness-training} ---   
$\amTFT(\WelfareFunctions)$ --- 
can be seen as a more general variant of an approach proposed by 
\citet{boutilier1999sequential} for coordinating in pure coordination games. As
it implicitly searches for overlap in the agents' sets of allowed welfare functions,  
it is also similar to 
\citet{rosenschein1988deals}'s
approach to reaching agreement in general-sum games via sets of proposals by 
each agent.  

\section{Coordination, bargaining and normative disagreement}\label{sec:taxonomy} 
We are interested in a setting in which
multiple actors (``principals'') 
train machine learning systems offline, and
then deploy them into an environment 
in which they interact.
For instance, 
two different companies might 
separately train systems
to negotiate on their behalf and deploy them without explicit coordination 
on deployment. 
In this section, we develop a taxonomy of environments that these
agents might face, and relate these different types of
environments to the difficulty of bargaining.


\subsection{Preliminaries}

We formalize multi-agent environments as
partially 
observable stochastic games (POSGs). For simplicity
we assume two players, $i=1,2$. We will 
index player $i$'s counterpart by $-i$. 
Each player has an action space $\Act_i$. 
There is a space $\States$ of states
$S^t$ which 
evolve according to a Markovian transition function
$P(S^{t+1} \mid S^t, A_1^t, A_2^t)$. At
each time step, each player 
sees an observation $O_i^t$ which depends on 
$S^t$. Thus each player has an accumulating
history of observations $\Histories_i^t = 
\{ O_1^v, A_1^v \}_{v=1}^t$. We refer to the set
of all histories for player $i$ as 
$\Histories_i = \cup_{t=1}^\infty
\Histories^t_i$ and assume for simplicity that the initial 
observation history is fixed and common knowledge:
$h^0_1 = h^0_2 \equiv h^0$. 
Finally, principals choose among policies
$\pi_i: \Histories_i \rightarrow \Delta(\Act_i)$, which we imagine as artificial agents 
deployed by the principals.
We will refer to policy profiles generically 
as $\policyProfile \in \Pi:=\Pi_1 \times \Pi_2$.

\par Each player has a reward function $r_i$, 
such that $r_i(S^t, A_1^t, A_2^t)$ is their reward
at time $t$. We define the value to player $i$ of policy
profile $\policyProfile$ starting at history $h_i^t$ as
$V_i(h_i^t, \policyProfile) = 
\expect_\pi \left[ \sum_{v=t}^\infty \gamma^{v-t} r_i(
  S^v, A_1^v, A_2^v)
\mid H_i^t = h_i^t \right]$, where $\gamma \in  [0, 1)$ is a 
discount factor, and the value of a policy profile 
to player $i$ as 
$V_i(\policyProfile) = V_i(h^0, \policyProfile)$. A payoff profile 
is then a tuple $(V_1(\policyProfile), V_2(\policyProfile))$.
We say that $\pi$ is a (Nash) equilibrium of 
a POSG if $\pi_i \in \argmax_{\pi_i' \in \Pi_i}
V_i(h^0, \pi_i', \pi\mi)$ for $i=1,2$. We say that
$\pi$ is Pareto-optimal if for $i=1,2$ 
and $\pi' \in \Pi$ we have that
$V_i(\pi') > V_i(\pi)$ implies 
$V\mi(\pi') < V\mi(\pi).$ 

\subsection{Coordination problems}
We define a coordination problem as a game involving multiple Pareto-optimal equilibria 
(cf. \citealt{zhang2015equilibrium}), which require some coordinated action to achieve.
That is, if the players disagree about which equilibrium they are playing, they will not reach a Pareto-optimal outcome.
A \textit{pure coordination problem} is a game
in which there are multiple Pareto-optimal 
equilibria over which agents have identical interests. 
Although agents may still experience difficulties in pure coordination games, for instance due to a noisy communication channel, they are made easier by the fact that 
principals are indifferent between the Pareto-optimal equilibria.

\subsection{Bargaining problems and normative disagreement}\label{sec:bargaining}
We define a \textit{bargaining problem (BP)} to be a
game in which there are multiple  
Pareto-optimal equilibria over which the principals
have \textit{conflicting} preferences. 
These equilibria represent more than one way to collaborate for mutual benefit, or put in another way, for sharing a surplus.
Thus a bargaining problem is a mixed-motive coordination problem, in which there is conflicting interest between Pareto-optimal equilibria and common interest in reaching a Pareto-optimal equilibrium. 

\par We can distinguish between BPs which are \textit{symmetric} and
\textit{asymmetric} games. A 2-player game is symmetric if for any 
attainable payoff profile 
$(a, b)$ there exists a profile $(b, a)$.
The reason this distinction is important is
that all (finite) symmetric games have a symmetric Nash 
equilibrium \citep{nash1990non}; thus symmetric games
have a natural set of focal points \citep{schelling1958strategy} for aiding coordination in
mixed-motive contexts, while asymmetric BPs may not.
Similarly, given a chance to play a correlated 
equilibrium \citep{aumann1974subjectivity},
agents in a symmetric BP could play according to a 
correlated equilibrium which randomizes
using a symmetric distribution
over Pareto-optimal payoff profiles. 

Figure \ref{fig:coordination-problems} displays 
the payoff matrices of three coordination games: 
Pure Coordination, and two variants 
of Bach or Stravinsky (BoS), one of which is a 
symmetric BP
and one of which is an asymmetric BP.
Pure Coordination
is not a BP because it is not a mixed-motive game as the players only care about playing the same action.
On the other hand, in the case of symmetric BoS the players do have conflicting interest, however, there is a correlated equilibrium 
-- tossing a commonly observed fair coin -- 
that is
intuitively the most reasonable way of coordinating: It both maximizes the total payoff and offers each player the same expected reward.



\begin{figure}[H]
\small
\centering

\subfloat{
 \begin{game}{2}{2}
& B     & S\\
B   & $1,1$  & $0,0$\\
S   & $0,0$   & $1,1$
\end{game}
}
\qquad
\subfloat{
\begin{game}{2}{2}
      & B     & S\\
B   & $3,2$  & $0,0$\\
S   & $0,0$   & $2,3$
\end{game}
}
\qquad
\subfloat{
\begin{game}{2}{2}
      & B     & S\\
B   & $4,1$  & $0,0$\\
S   & $0,0$   & $2,2$
\end{game}
}
\caption{Payoff matrices for Pure Coordination (left), BoS (middle), Asymmetric BoS (right). 
}
\label{fig:coordination-problems}
\end{figure}

\par 
To 
develop a better intuition for
the sense in which equilibria can be more or less reasonable, consider a BoS with 
an
extreme asymmetry, with equilibrium payoffs (15, 10) and (1, 11). Even though each of these equilibria is Pareto-optimal, the latter 
seems unreasonable or \textit{uncooperative}: it yields a lower total payoff, more inequality, and lowers the reward of the worst-off player in the equilibrium. To formalize this intuition, we characterize the reasonableness of a Pareto-optimal payoff profile in terms of the extent to which it optimizes \textit{welfare functions}: we can say that (1, 11) is unreasonable because there is no (impartial, see below) welfare function that would prefer it.
Different welfare functions with different properties have been introduced in the literature (see Appendix \ref{sec:appendix_welfare_functions} for an overview), but two uncontroversial properties of a welfare function are
\emph{Pareto-optimality} (i.e., its optimizer 
should be Pareto-optimal) and 
\emph{impartiality}\footnote{The requirement of impartiality is also called \textit{symmetry} in the welfare 
economics literature, or \textit{anonymity} in the social choice literature  
\citep{campbell1980anonymity}.}
(i.e., the welfare of a policy profile
should be invariant to permutations of player 
indices).
From the latter property we can observe that the intuitively reasonable choice of playing the correlated equilibrium with a \textit{fair} correlation device in the case of symmetric games is also the choice which \textit{all} impartial
welfare functions recommend, provided that it is possible for the agents to play a correlated equilibrium.
\par
By contrast, in the asymmetric BoS from Figure \ref{fig:coordination-problems} we see that playing BB 
maximizes utilitarian welfare $\welfareUtil(\pi) = V_1(\pi) + 
V_2(\pi)$, whereas playing SS maximizes the egalitarian welfare $w^{\mathrm{Egal}}(\pi) = 
\min\{ V_1(\pi), V_2(\pi)\}$
subject to $\pi$ Pareto-optimal. 
Throwing a correlated fair coin to choose between the two would lead to an expected payoff that is 
optimal with respect to the Nash welfare $w^{\mathrm{Nash}}(\pi) = V_1(\pi) \cdot V_2(\pi)$. Each of 
these different equilibria has a normative principle to motivate it.

In the best case, all principals agree on the same welfare function as a common ground for coordination in asymmetric BPs. However, the principals may have reasonable differences with respect to which welfare function they perceive as fair,
and so they may train their systems to
optimize different welfare functions, leading to coordination failure
when the systems interact after deployment.
In cases where agents were independently-trained according to inconsistent welfare functions, we will say that there
is \textit{normative disagreement}.
There may be different
degrees of normative disagreement. For instance, the difference
$| \max_\pi \welfareUtil(\pi) - \max_\pi w^{\mathrm{Egal}}(\pi) |$
differs across games. 
\par

To summarize, we relate the difficulty of coordination problems to the concept of welfare functions: In pure coordination problems, they are not needed. In symmmetric bargaining problems, they all point to the same equilibria. And in asymmetric bargaining problems, they can serve to filter out intuitively unreasonable equilbria, but leave the possibility of normative disagreement between reasonable ones. This makes normative disagreement a critical challenge for bargaining.
In the remainder
of the paper, we will focus on asymmetric bargaining problems
for this reason.


\subsection{Norm-adaptive policies}

When there is potential for normative disagreement, 
it can be helpful for agents to have some flexibility as to the norms 
they play according to. 

\par A number of definitions of norms have been proposed in the social scientific literature (e.g., \citet{gibbs1965norms}), 
but they tend to agree that a norm is a rule specifying acceptable and unacceptable behaviors in a group of people, along with 
sanctions for violations of that rule. Game-theoretic work sometimes identifies norms with \textit{equilibrium selection devices} 
(e.g., \citealt{binmore1994economist,young1996economics}). Given that complex games generally exhibit many equilibria, 
some rule (such as maximizing a particular welfare function) is needed to select among them. 

\par Normative disagreement arises (among other reasons) from the 
underdetermination of good behavior in complex multi-agent settings. This is exemplified by the 
problem of conflicting equilibrium selection criteria in 
asymmetric bargaining problems, 
but there are other possible cases of undeterdetermination. One example is
undetermination of the \textit{beliefs} that a reasonable agent should act according to in games of incomplete 
information (cf. the common prior assumption in Bayesian games \citep{morris1995common}). Thus our definition of 
norm
will be more general than an equilibrium selection device, though in the remainder of the paper we will 
focus on the use of welfare functions to choose among equilibria. 

\begin{definition}[Norms] Given a 2-player POSG, a \textit{norm} $N$ is a tuple $N = (\Pi_1^N, \Pi_2^N, p_1, p_2, \Pi_1^p, \Pi_2^p)$, 
where $\Pi_i^N$ are normative policies (i.e., those which comply with the norm); $\Pi_i^p$ are ``punishment'' policies, which are enacted
when deviations from a normative policy are detected; and $p_i$ are rules for judging whether a deviation has happened and how to respond, i.e., 
$p_i: \Histories_i \rightarrow \{\texttt{True}, \texttt{False}\}$. A policy $\pi_i$ is compatible with norm $N$ if, for all $H_i \in \Histories_i$, 
\begin{equation*}
\pi_i(h) = \begin{cases}
                \pi_i^N(h), \pi^N_i \in \Pi_i^N, \text{ if } p_i(h) = \texttt{False}; \\
                \pi_i^p(h), \pi^p_i \in \Pi_i^p, \text{ if } p_i(h) = \texttt{True}.
             \end{cases}
\end{equation*}
\end{definition}
For example, in the iterated Asymmetric BoS, one norm is for both players to always play $B$ 
(this is the normative policy), 
and 
for a player to respond to deviations by continuing to play $B$. A similar norm 
is given by both players always playing $S$ instead. 
More generally, following the folk theorems \citep{mailath2006repeated}, 
an equilibrium in a repeated game corresponds to a norm, in which a 
profile of normative policies corresponds to play on the equilibrium path; 
the functions $p_i$ indicate whether there is deviation from equilibrium play; 
and punishment policies $\pi_i^p$ are chosen such that players are made
worse off by deviating than by continuing to play according to the normative
policy profile.

\par Now we formally define norm-adaptive policies.

\begin{definition}[Norm-adaptive policies.]
Take a 2-player POSG and a non-singleton 
set of norms $\Norms$ for that game. Let $N_i: \Histories_i \rightarrow \Norms$ be a surjective function that maps 
histories of observations to norms (i.e., for each norm in $\Norms$, there are histories for which 
$N_i$ chooses that norm.) 
Then, a policy $\pi_i'$ is \textit{norm-adaptive} if, for all $h$, there is a policy $\pi_i$
such that $\pi_i$ is compatible with $N_i(h)$ and $\pi'_i(h) = \pi_i(h)$.
\end{definition}
That is, norm-adaptive policies are able to play according to different norms depending on the circumstances.    
As we will see below, the 
benefit of making agents explicitly norm-adaptive is that this can help to prevent or resolve normative disagreement. 
Lastly, note that we can define higher-order norms and higher-order norm-adaptiveness: a higher-order norm 
is a norm $N$ such that policies $\Pi_i^N$ are themselves norm-adaptive with respect to some set of norms.
This framework allows us for discussing differing (higher-order) norms for resolving normative disagreement.  

\section{Multi-agent learning and cooperation failures in BPs}\label{sec:simple}

In this section, we 
illustrate how
cooperation-inducing, but norm-\textit{un}adaptive, multi-agent learning algorithms fail to cooperate in
asymmetric BPs.
In Section \ref{sec:robustness-training} we will then show how norm-adaptiveness improves cooperation.
The environments and algorithms considered are summarized in Table
\ref{tab:envs_and_algos}.


\subsection{Setup: Learning algorithms and environments}

In order to both include algorithms which use an explicitly specified welfare function and ones which do not, we use the
Learning with Opponent-Learning Awareness (LOLA) algorithm \citep{foerster2018learning} in its 
policy gradient and exact value function optimization versions as an example for the latter, and Generalized Approximate Markov Tit-for-tat ($\amTFT(w)$) for the former. 

We introduce $\amTFT(w)$ as a variant of
\citet{lerer2017maintaining}'s Approximate
Markov Tit-for-tat. 
The original 
algorithm trains a cooperative policy profile on the 
utilitarian welfare,
as well as a punishment policy, and switches
from the cooperative policy to the punishment policy
when it detects that the other player is
defecting from the cooperative policy. The algorithm has the appeal that it 
``cooperates with itself, is robust against defectors, and incentivizes cooperation from its partner''
\citep{lerer2017maintaining}. 
We consider the more general class of algorithms that we call $\texttt{amTFT}(w)$ in which a cooperative policy is 
constructed by optimizing an arbitrary welfare function $w$. Note that although 
$\texttt{amTFT}(w)$ takes a welfare 
function as an argument, the resulting policies are not norm-adaptive. 
To cover a range of environments representing both symmetric and asymmetric games,
we use some existing multi-agent reinforcement learning environments (IPD, 
Coin Game [CG; \citealt{lerer2017maintaining}]) and introduce two new ones (IAsymBoS, ABCG).


Iterated asymmetric Bach or Stravinsky (IAsymBoS) is an asymmetric version of the iterated Bach or 
Stravinsky matrix game. 
At each time step, the game defined on the right
in Figure \ref{fig:coordination-problems} is played. We focus
on the asymmetric variant due to the argument in Section
\ref{sec:bargaining} that players could resolve the symmetric
version by playing a symmetric equilibrium; however, applying
LOLA without modification would also lead to coordination failure
in the symmetric variant. It should also be noted that IAsymBoS is not an SSD because it does not incentivize defection: agents cannot gain from miscoordinating. Thus we consider IAsymBos to be a minimal example for an environment that can produce bargaining failures out of normative disagreement.

For a more involved example we also introduce an asymmetric version of the stochastic gridworld Coin Game \citep{lerer2017maintaining} -- 
asymmetric bargaining Coin Game (ABCG) -- which is both an SSD and an asymmetric BP.
In ABCG, a red and a blue agent navigate a grid with coins. Two coins simultaneously appear on this grid: a Cooperation coin and a Disagreement coin, each colored red or blue. 
Cooperation coins can only be consumed by both players moving onto them at the same time, whereas the Disagreement coin can only be consumed by the player of the same color as the coin.
The 
game is designed such that $\welfareUtil$ is maximized by both players always consuming the Cooperation coin, however, this will make one player benefit more than the other. Due to the sequential nature of the game, this means that welfare functions which also care about (in)equity will favor an equilibrium in which the player who benefits less from the Cooperation coin is allowed to consume the Disagreement coin from time to time without retaliation.

Players move simultaneously in all games. Note that we assume each player to have full knowledge of the other player's rewards as this is required by LOLA-exact during training and by $\amTFT(w)$ both at training and deployment.
We use simple tabular and neural network policy parameterizations, which are described in 
Appendix \ref{sec:experimental_details} along with learning algorithms.
\begin{table}[h]
\centering
\fontsize{8.5}{10.2} \selectfont
    \caption{Summary of the environments and learning algorithms that we use to 
    study sequential social dilemmas (SSDs) and asymmetric bargaining problems in this section. 
    }
    
    \smallskip
    \begin{tabular}{llll}  
\toprule
  \textbf{Environment} & \textbf{Asymmetric BP} & \textbf{SSD} & \textbf{Learning algorithms} \\\midrule
  Iterated Prisoner's Dilemma (IPD) & \xmark & \cmark &
  LOLA-Exact, 
  $\amTFT(w)$\\
  Iterated Asymmetric BoS (IAsymBoS) & \cmark & \xmark & LOLA-Exact, $\amTFT(w)$ \\
  Coin Game (CG) & \xmark & \cmark & $\text{LOLA-PG}$, $\amTFT(w)$ \\ 
  Asymmetric bargaining Coin Game (ABCG) & \cmark & \cmark & $\text{LOLA-PG}$, $\amTFT(w)$ 
  \\\bottomrule
    \end{tabular}
    \label{tab:envs_and_algos}
\end{table}

\subsection{Evaluating cooperative success}
\begin{figure}[h]
    \centering
    \includegraphics[width=130mm,keepaspectratio]{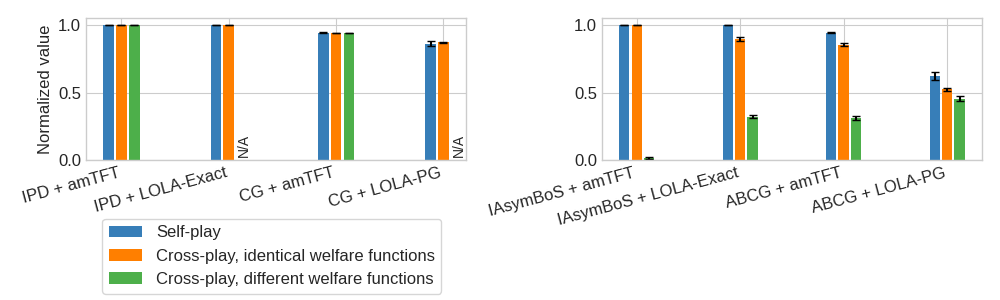}
    \caption{Comparison of cooperative success, measured by $\texttt{NormalizedScore}$, 
    between SSDs (left) and
    asymmetric BPs (right).
    Small black bars are standard errors. For LOLA in asymmetric BPs, payoff profiles are collected into sets
    corresponding to the welfare functions they \textit{implicitly} optimize, in order to obtain cross-play between identical and between different welfare functions.
    }
    \label{fig:results-all}
\end{figure}
\begin{wrapfigure}{R}{0.5\textwidth}
    \vspace{-10mm}

    \begin{center}
        \includegraphics[width=70mm,keepaspectratio]{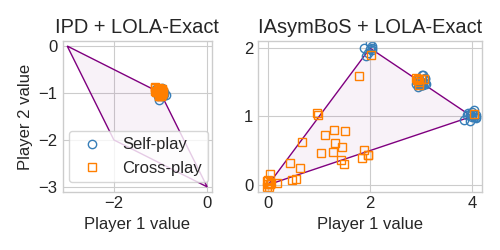}
    \end{center}
    \caption{Values of policy profiles trained in IPD and in IAsymBoS with LOLA in self-play 
    and cross-play. 
    In the IPD, all policy profiles cooperate in cross-play. 
    But in IAsymBoS, LOLA converges to different Pareto-optimal 
    outcomes in self-play, leading to failures in cross-play. 
    The purple areas describe sets of attainable payoffs. Jitter is added for improved visibility.}
    \label{fig:main-scatter-plot}
    \vspace{-5mm}
\end{wrapfigure}
We train policies in environments listed in Table \ref{tab:envs_and_algos} with the corresponding MARL algorithms.
After the training, we evaluate cooperative success in self-play and cross-play.
\textit{Self-play} 
refers to average performance between jointly-trained policies. 
\textit{Cross-play} refers to average performance between 
independently-trained policies.
We distinguish between two kinds of cross-play: 
that between agents trained using the same notion of collective optimality (such as a welfare function 
or an inequity aversion term in the value function), and between agents trained using different ones. 
For $\amTFT(w)$ we use two
welfare functions, $w^{\mathrm{Util}}$ and an inequity-averse welfare-function $w^{\mathrm{IA}}$
(see Appendix \ref{sec:experimental_details}). 

Comparing cooperative success across environments is not
straightforward: environments have different sets of feasible payoffs, 
and we should not evaluate cooperative success with a single welfare function,
because our concerns about normative disagreement stem from the fact that it is not obvious what single
welfare function to use.
Thus we compute cooperative success as follows.
First, we take $\WelfareFunctions$ to be the set of welfare functions which
we use in our experiments in the environment in
question. For instance, for IAsymBoS and $\amTFT(w)$ this is $\WelfareFunctions = \{ \welfareUtil, \welfareIneq\}$.
Second, we define disagreement payoff profiles $\pi^d$ corresponding to cooperation failure; in IAsymBoS the disagreement policy profile would be the one which has payoff profile $(0, 0)$.
Then, for a policy profile $\pi$, we compute its normalized score as 
$\mathrm{NormalizedScore}(\pi) \triangleq \max_{w \in \WelfareFunctions} \frac{w(\pi) - w(\pi^d)}{ \max_{\pi'} w(\pi') - w(\pi^d)}$. Under this scoring method, players do maximally well when they 
play a policy profile which is optimal according to some welfare function in $\WelfareFunctions$.


Figure \ref{fig:results-all} illustrates the difference between cooperation in SSDs and bargaining problems. When there is a single "cooperative" equilibrium, as in the case of IPD and CG, cooperation-inducing learning algorithms typically achieve cooperation in
cross-play.
In contrast, in IAsymBoS and ABCG we observe mild performance degradation in cross-play where agents optimize the same welfare functions, and strong degradation when agents optimize different welfare functions.

\section{Benefits and limitations of norm-adaptiveness}\label{sec:robustness-training}

The cooperation-inducing properties of the algorithms in Section \ref{sec:simple} 
are simple and are not
designed 
to help agents
resolve
potential normative disagreement
to avoid Pareto-dominated outcomes. The two main problems are (1) that the algorithms are ill-equipped for \textit{reacting} to normative disagreement, and (2) that they may confuse normative disagreement with defection.

The former problem is already evident in IAsymBoS. There, playing according to incompatible welfare functions is not interpreted as defection by $\amTFT$. This is not necessarily bad -- we claim that normative disagreement \textit{should} be treated differently to defection -- but it does mean that $\amTFT$ lacks the policy space to react to normative disagreement. For the latter problem, we observe that in the ABCG the $\amTFT(w)$-agent does classify some of the opponent's actions as defection, even though they are aimed to optimize an impartial welfare function, and punishes accordingly.

\subsection{$\amTFT(\WelfareFunctions)$}

\begin{wrapfigure}{R}{0.5\textwidth}
    \vspace{-5mm}
    \begin{center}
        \includegraphics[width=70mm,keepaspectratio]{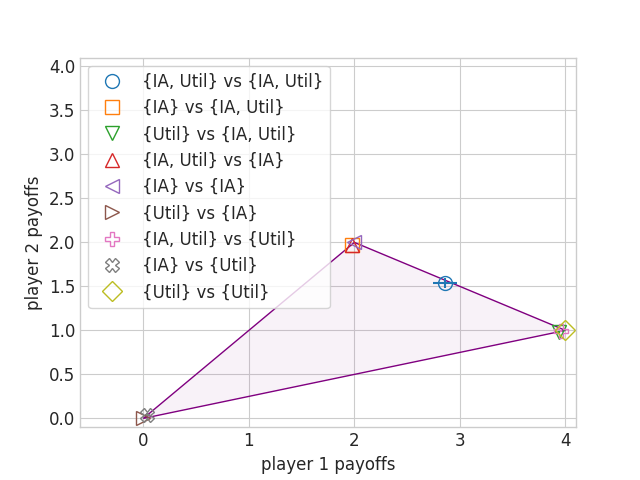}
    \end{center}
    \caption{Average cross-play payoff profiles with standard errors for profiles of policies
    $\left( \amTFT(\WelfareFunctions_1), \amTFT(\WelfareFunctions_2) \right)$. 
    The legend shows various combinations of sets $\WelfareFunctions_1$ and $\WelfareFunctions_2$. Coordination 
    failure only occurs when $\WelfareFunctions_1 \cap \WelfareFunctions_2 = \emptyset$.}
    \label{fig:amtftW}
    \vspace{-5mm}
\end{wrapfigure}

To overcome both of these problems, we propose a 
norm-adaptive modification to $\amTFT(w)$.
As we only aim to illustrate the benefit of norm-adaptive policies, we keep the implementation simple: Instead of a 
welfare 
function $w$ the algorithm now takes a welfare-function set $\WelfareFunctions$: $\amTFT(\WelfareFunctions)$. The agent starts out 
playing according to some $w \in \WelfareFunctions$. The initial choice can be made at random or according to a preference ordering 
between the $w \in \WelfareFunctions$. 

\par $\amTFT(\WelfareFunctions)$ then follows a two-stage decision process. 
First, if the agent detects that their opponent is merely playing according to a different welfare  function than itself, $w$ gets re-sampled. 
Second, if the agent detects defection by the opponent, it will punish, just as in the original algorithm. However, 
by first checking for normative disagreement, we make sure that punishment does not happen as a result of a 
normative disagreement. 

\par In Figure \ref{fig:amtftW} we illustrate how, assuming uniform re-sampling 
from $\WelfareFunctions$, 
$\amTFT(\WelfareFunctions)$ can 
overcome normative disagreement and perform close to the Pareto frontier. 
Agents need not
sample uniformly, though. 
For instance when the number of possible welfare functions is large, it would be beneficial to 
put higher probability on welfare functions which one's counterpart is more likely to be optimizing. Furthermore an agent might want to put higher probability on welfare functions that it prefers.

\par Notice that, when $w \in \WelfareFunctions$, one player using $\amTFT(\WelfareFunctions)$ 
rather than $\amTFT(w)$ leads to a (weak) Pareto improvement. 
Beyond that, in 
anticipation of bargaining failure due to not having a way to resolve normative disagreement, players are 
incentivized to include more than just their preferred welfare function into $\WelfareFunctions$. 
In both IAsymBoS (see Figure \ref{fig:amtftW}) and ABCG (see Table \ref{tab:payoffs-abcg}, Appendix 
\ref{sec:more-results}) we can observe significant improvement for cross-play when at least one player is 
norm-adaptive.

\subsection{The exploitability-robustness tradeoff}
As our experiments with $\amTFT(\WelfareFunctions)$ show, agents 
who are more flexible are less prone to bargaining
failure due to normative disagreement. However, they are prone
to having that flexibility exploited by their
counterparts. For instance, an agent which is open to optimizing either $\welfareUtil$ or $\welfareIneq$ will end 
up optimizing $\welfareIneq$ if playing against an agent for whom $\WelfareFunctions = \{\welfareIneq\}$. 
More generally,
an agent who puts higher probability on a welfare function it has a preference for, when 
re-sampling $w$, will be less robust against counterparts who disprefer that welfare function and put a lower 
probability on it. An agent who tries to guess the counterpart's welfare function and tries to accommodate to this 
is exploitable to agents who do not.

\section{Discussion}
We formally introduce a class of hard cooperation problems --
asymmetric 
bargaining problems 
-- and situate them within a wider game taxonomy.
We 
argue that they are hard because there can arise 
normative disagreement between multiple "reasonable" cooperative equilibria, 
characterized by divergence in the preferred outcomes according to different welfare
functions.
This presents a problem for those deploying AI systems without coordinating on 
the norms those systems follow. 
We have introduced the notion of \textit{norm-adaptive} policies, which are
policies that allow agents to change the norms according to which they play, 
giving rise to opportunities for resolving normative disagreement.  
As an example of a class of norm-adaptive policies, we introduced 
$\amTFT(\WelfareFunctions)$, and showed in experiments that this tends to 
improve robustness to normative disagreement. 
On the other hand, we have demonstrated a 
\textit{robustness-exploitability tradeoff}, under which methods
that learn more normatively flexible strategies 
are also more vulnerable to exploitation.

\par There are a number of limitations to this work.
We have throughout assumed that the agents have a 
common and correctly-specified model of their environment, 
including their counterpart's reward function. In real-world
situations, however, principals may not have identical 
simulators with which to train their systems, and there
are well-known obstacles to the honest disclosure of 
preferences
\citep{hurwicz1972informationally}, 
meaning that common knowledge of rewards
may be unrealistic.
Similarly, we assumed a certain degree of reasonableness on part of the 
principals, see by the willingness to play the symmetric correlated 
equilibrium in symmetric BoS (Section 
\ref{sec:taxonomy}), for instance. However, we believe this to be a minimal 
assumption as the deployers of such agents are aware of the risk of 
coordination failure as a result of insisting on equilibria that no impartial welfare function would recommend.

\par Future work should consider more sophisticated and learning-driven 
approaches to designing 
norm-adaptive policies, as 
$\amTFT(\WelfareFunctions)$ relies on a finite set of user-specified welfare functions 
and a hard-coded procedure for switching between policies. 
One possibility is to train 
agents who are themselves capable of 
jointly deliberating
about the principles they should use to select an 
equilibrium, e.g., 
deciding among
the axioms which characterize
different bargaining solutions (see Appendix \ref{sec:appendix_welfare_functions})
in the hopes that they will be able to resolve
initial disagreements. Another direction is resolving disagreements
that cannot be expressed as disagreements over the welfare functions 
according to which agents play; for instance, disagreements over the
beliefs or world-models which should inform agents' behavior. 

\begin{ack}
We'd like to thank Yoram Bachrach, Lewis Hammond, Vojta Kovařík, Alex Cloud, as well as our anonymous reviewers for their valuable feedback; Daniel Rüthemann for designing Figures \ref{fig:coin-game-diagram} and \ref{fig:ab-coin-game-diagram}; Chi Nguyen for crucial support just before a deadline; Toby Ord and Jakob Foerster for helpful comments.

Julian Stastny performed part of the research for this paper while interning at the Center on Long-Term Risk. Johannes Treutlein was supported by the Center on Long-Term Risk, the Berkeley Existential Risk Initiative, and Open Philantropy. Allan Dafoe received funding from Open Philantropy and the Centre for the Governance of AI.

\end{ack}

\bibliographystyle{plainnat}
\bibliography{refs}

\clearpage
\appendix
\setcounter{page}{1}
\section{Welfare functions}
\label{sec:appendix_welfare_functions}

Different welfare functions have been introduced in the literature. Table \ref{tab:welfare} gives an overview over commonly discussed welfare functions. Their properties are noted in Table \ref{tab:welfare2}. 

\begin{table}[H]
    \caption{
    Welfare functions, adapted to the multi-agent RL 
    setting where two agents with value functions $V_1, V_2$
    are bargaining over the policy profile
    $\policyProfile$ 
    to deploy. 
    $d_i$, the \textit{disagreement value}, is the value which player i gets when bargaining fails.
    } 
    \smallskip
    \centering
    \bgroup
    \def\arraystretch{1.5}
    \begin{tabular}{cc}
    \toprule
         \textbf{Name of welfare function} $w$ & \textbf{Form of} 
         $w(\pi)$ \\ \midrule
         Nash \citep{nash1950bargaining} &  
            $            \left[ V_1(\policyProfile) - d_1 \right] 
            \cdot \left[ V_2(\policyProfile) - d_2 \right]$ \\ 
        \makecell{Kalai-Smorodinsky \\ \citep{kalai1975other}} & 
            \makecell{$
            \Big\lvert 
            \frac{V_1(\policyProfile) - d_1}{V_2(\policyProfile) - d_2} -
            \frac{\sup_{\policyProfile} V_1(\policyProfile) - 
            d_1}{\sup_{\policyProfile}V_2(\policyProfile) - d_2}
            \Big\rvert$ s.t. Pareto-optimal} \\ 
        Egalitarian \citep{kalai1977proportional} & 
          $\min\left\{ 
          V_1(\policyProfile) - d_1,
          V_2(\policyProfile) - d_2
          \right\}$ s.t. Pareto-optimal \\ 
        Utilitarian  \citep{harsanyi1955cardinal}
        & $V_1(\policyProfile) + V_2(\policyProfile)$ \\ \bottomrule 
    \end{tabular}
    \vspace{10pt}
    \egroup
    \label{tab:welfare}
\end{table}

\par
\textit{Pareto-optimality} refers to the property that the welfare function's optimizer should be Pareto-optimal. \textit{Impartiality}, often called symmetry, implies that the welfare of a policy profile should be invariant to permutations of player indices. These are treated as relatively uncontroversial properties in the literature.
\textit{Invariance to affine transformations} of the payoff matrix is usually motivated by the assumption that interpersonal comparison of utility is impossible. In contrast, the utilitarian welfare function assumes that such comparisons are possible.
\textit{Independence of irrelevant alternatives} refers to the principle that a preference for an equilibrium A over equilibrium B should only depend on properties of A and B. That is, a third equilibrium C should not change the preference ordering between A and B.
\textit{Rescource monotonicity} refers to the principle that if the payoff for any policy profile increases, this should not make any agent worse off.


\begin{table}[H]
\centering
\caption{Properties of welfare functions. For properties of Nash welfare here the set of feasible payoffs is assumed to be convex (\citet{zhou1997nash} describes properties in the non-convex case).}
\smallskip
\fontsize{8.5}{10.2} \selectfont
\begin{tabular}{ccccc}
\toprule
 & Nash & Kalai-Smorodinsky & Egalitarian & Utilitarian
\\\midrule
Pareto optimality & \cmark &  \cmark &  \cmark &  \cmark 
\\
Impartiality & \cmark &  \cmark &  \cmark &  \cmark 
\\
Invariance to affine transformations &  \cmark & \cmark & \xmark & \xmark 
\\
Independence of irrelevant alternatives & \cmark &   \xmark &  \cmark &  \cmark 
\\
Resource monotonicity & \xmark &   \cmark &  \cmark & \xmark 
\\\bottomrule
\end{tabular}
\label{tab:welfare2}
\end{table}

    

\section{Environments}

\subsection{Additional descriptions of environments}

\subsubsection*{Coin Game}

\begin{figure}[H]
    \centering
    \includegraphics[height=70mm, keepaspectratio]{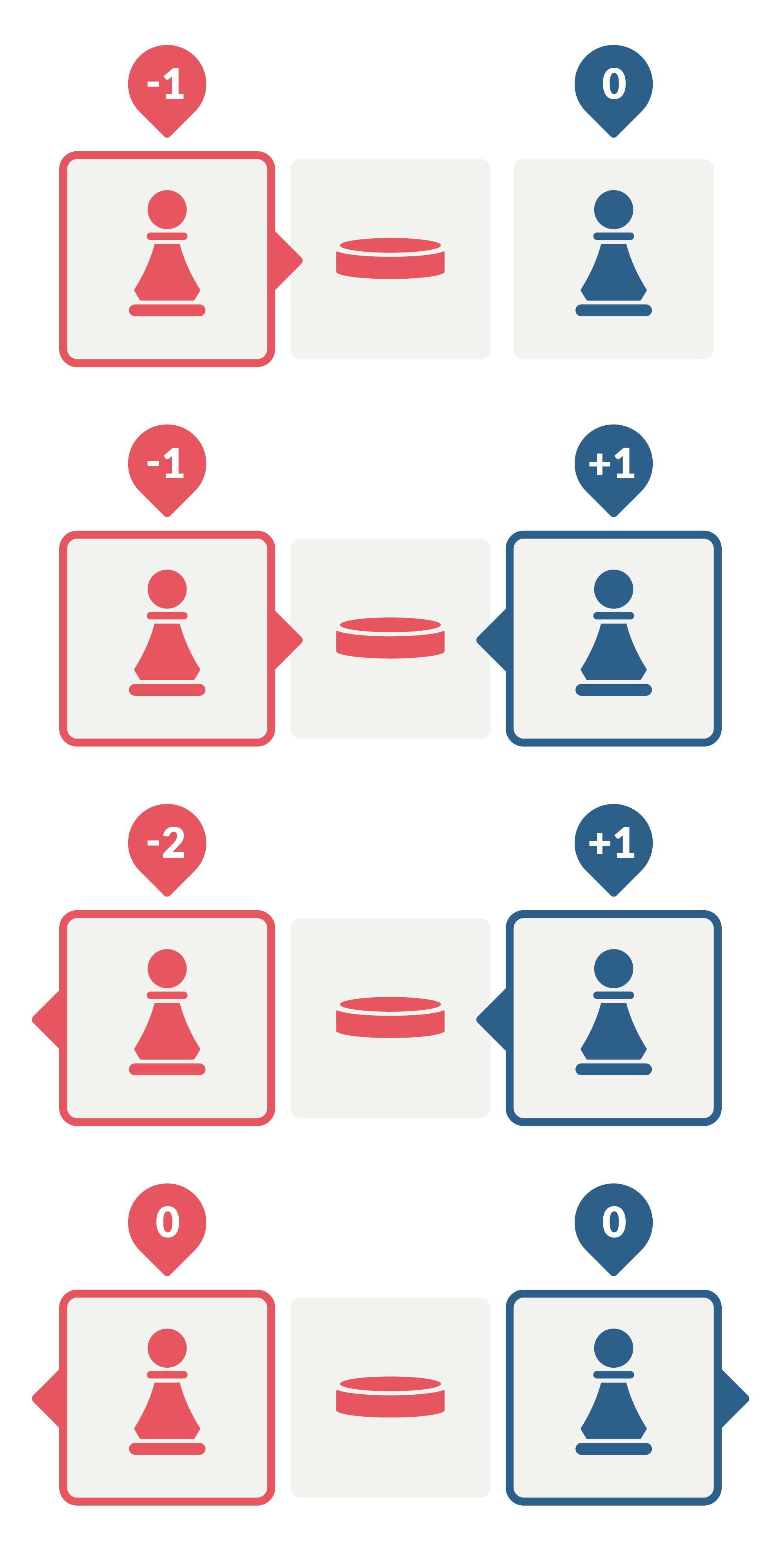}
    \caption{A pictorial description of agents' rewards in the Coin Game environment \citep{lerer2017maintaining}. A Red player and a Blue player
navigate a grid and pick up randomly-generated coins. Each player gets a reward of $1$
for picking up a coin of any color. But, a player gets a reward of $-2$
if the other 
player picks up their coin. This creates a social dilemma in which the 
socially optimal behavior is to only get one's own coin, but there is 
incentive to defect and try to get the other player's coin as well.}
    \label{fig:coin-game-diagram}
\end{figure}

\subsubsection*{Asymmetric bargaining Coin Game}

\begin{figure}[H]
    \centering
    \includegraphics[width=130mm, keepaspectratio]{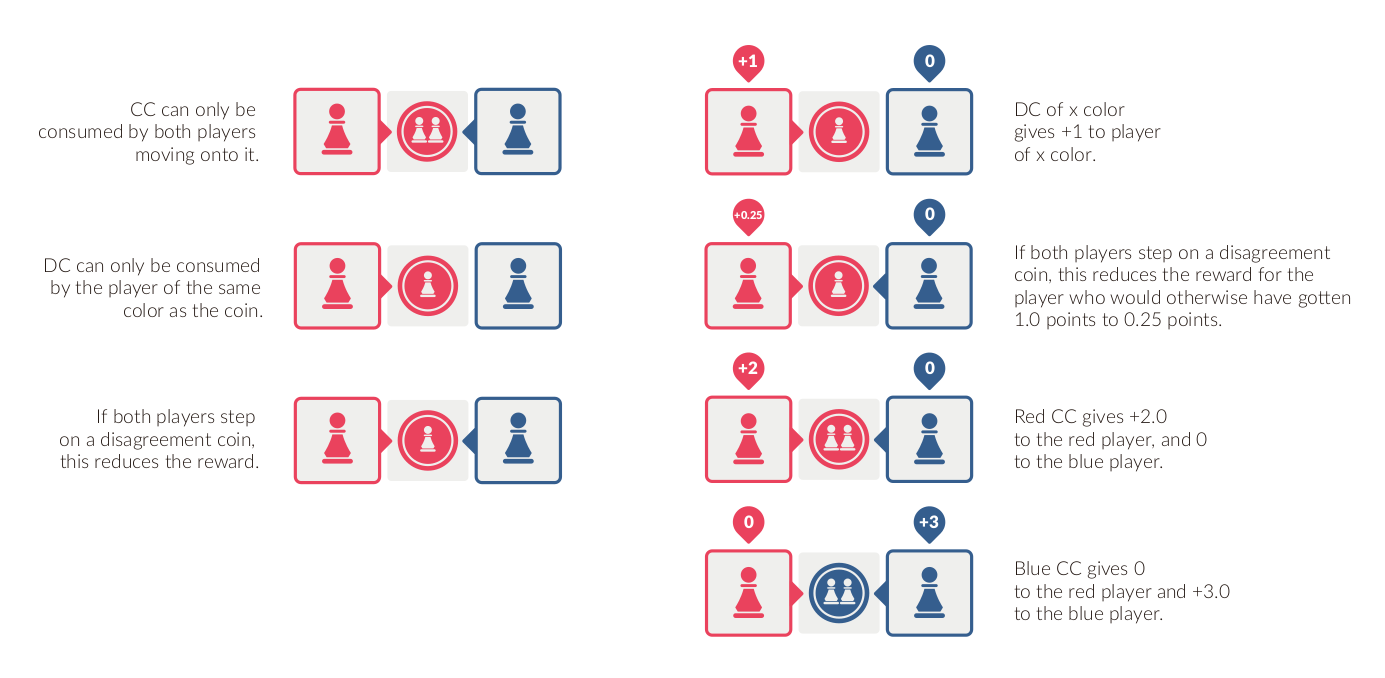}
    \caption{Illustration of the Asymmetric bargaining Coin Game described in Section \ref{sec:simple}. CC stands for Cooperative Coin and DC stands for Disagreement Coin.}
    \label{fig:ab-coin-game-diagram}
\end{figure}

\subsection{Disagreement profiles for each environment}

For computing the normalized scores in Figure \ref{fig:results-all}, we use the following disagreement profiles, corresponding to cooperation failure.
In IPD, the cooperation failures are only the joint defections (D,D) producing a reward profile of (-3, -3).
In IAsymBoS, they are the profiles (B, S) and (S, B), both associated with the reward profile (0, 0).
In Coin Game, the cooperation failures are when both players pick all coins at maximum speed selfishly and this produce a reward profile of (0, 0).
In Asymmetric bargaining Coin Game (ABCG), when the players fails to cooperates, they can punish the opponent by preventing any coin to be picked by waiting instead of picking their own defection coin (e.g. with $\amTFT$). In these cooperation failures, the reward profile with perfect punishment is (0, 0).
\section{Experimental details}
\label{sec:experimental_details}

\subsection{Learning algorithms}

We use the discount factor $\gamma=0.96$ for all the algorithms and environments unless specified 
otherwise in Table \ref{table:hyperparameters}.

\subsubsection*{Approximate Markov tit-for-tat ($\amTFT$)}

We follow the $\amTFT$ algorithm from \citep{lerer2017maintaining} with two changes:
1) Instead of using a selfish policy to punish the opponent, we use a policy that minimizes the reward of the opponent. 
2) We use the $\amTFT$ version that uses rollouts to compute the debit and the punishment length. We observed that using long rollouts to compute the debit increases significantly the variance on the debit and this leads to false positive detection of defection. To reduce this variance, we thus compute the debit without rollouts. We use the direct difference between the rewards given by the actual action and given by the simulated cooperative action. The rollout length used to compute the punishment length is 20. 

We note that in asymmetric BoS, training runs using the utilitarian welfare function sometimes learn to produce, in self-play, the egalitarian outcome instead of the utilitarian outcome. In these cases the policy gets discarded.

The inequity-averse welfare $\welfareIneq$ is defined as follows:

\begin{equation*}\label{eq:inequity-aversion}
    \begin{aligned}
        \welfareIneq(\pi) =
          V_1(\pi) + V_2(\pi) - \beta\cdot \expect 
          \left[ \lvert e^t_1(h^t_1, h^t_2) - e^t_2(h^t_1, 
          h^t_2) \rvert \right]
    \end{aligned}
\end{equation*}
%
where
$e_i^t(h^t_1, h^t_2) = \gamma \lambda e^{t-1}_i(h^{t-1}_1, 
h^{t-1}_2) + r_i(S^t, A_1^t, A_2^t)$
are smoothed cumulative rewards with
a discount factor $\gamma$ and a parameter $\beta$ which controls how much unequal outcomes are penalized.

\subsubsection*{Learning with Opponent-Learning Awareness (LOLA)}
Write $V_i(\theta_1, \theta_2)$ as the value to player
$i$ under a profile of policies with parameters
$\theta_1, \theta_2$. Then, the LOLA update \citep{foerster2018learning}
for player 1 at time $t$ with 
parameters $\delta, \eta > 0$ is
\begin{equation*}\label{eq:lola}
\theta^t_1 = \theta^{t-1}_1
+ \delta \nabla_{\theta_1} V_i(\theta_1, \theta_2) +  
\delta \eta
\left[ \nabla_{\theta_2} V_1(\theta_1, \theta_2) \right]^{\Tra}
\nabla_{\theta_1} \nabla_{\theta_2} V_2(\theta_1, \theta_2).
\end{equation*}
%


\subsection{Policies}

We used RLlib \citep{pmlr-v80-liang18b} for almost all the experiments. It is under Apache License 2.0. All activation functions are ReLU if not specified otherwise.

\subsubsection*{Matrix games (IPD and IBoS) with LOLA-Exact}
We used the official implementation of LOLA-Exact from https://github.com/alshedivat/lola. We 
slightly modified it to increase the stability and remove a few confusing behaviors. 
Following \citet{foerster2018learning}'s parameterization of policies in the iterated 
Prisoner's Dilemma, we use policies which condition on the previous pair of actions played, 
with the difference that instead of using one parameter for every possible previous action 
profile, we use $N$ parameters for every one of the previous action profile to always to play 
in larger action space ($N$ actions possible).
\subsubsection*{Matrix games (IPD and IBoS) with amTFT}
We use a Double Dueling DQN architecture + LSTM for both simple and complex environments when using amTFT. This architecture is non-exhaustively composed of a shared fully connected layer (hidden layer size 64), an LSTM (cell size 16), a value branch and an action-value branch both composed of a fully connected network (hidden layer size 64).  
\subsubsection*{Coin games (CG and ABCG) with LOLA-PG}
We used the official implementation from https://github.com/alshedivat/lola, which is under MIT license. We slightly modified it to increase the stability and remove a few confusing behaviors. Especially, we removed the critic branch, which had no effect in practice. We use a PG+LSTM architecture composed of two convolution layers (kernel size 3x3 and feature size 20), an LSTM (cell size 64) and a final fully connected layer.


ABCG + $\text{LOLA-PG}$ mainly generates policies that are associated with an egalitarian welfare function. Within this set of policies, some are a bit closer to the utilitarian outcome than others, which we used as a basis to classify them as "utilitarian" for the purpose of comparison in Figures \ref{fig:results-all} and \ref{fig:all_scatter_plots}. However, because the difference is small, we do not observe a lot of additional cross-play failure compared to cross-play between the "same" welfare function.
It should also be noted that we chose to discard the runs in which none of the agents becomes competent at picking any of the coins. 

\subsubsection*{Coin games (CG and ABCG) with amTFT}
We use a Double Dueling DQN architecture + LSTM for both simple and complex environments when using amTFT. The architecture used is non-exhaustively composed of two convolution layers (kernel size 3x3 and feature size 64), an LSTM (cell size 16), a value branch and an action-value branch both composed of a fully connected network (hidden layer size 32).  


\subsection{Code assets}

The code that we provide allows to run all of the experiments and to generate the figures with our results.
All instructions on how to install and run the experiments are given in the `README.md` file.
The code to run the experiments from Section \ref{sec:simple} is in the folder "base\_game\_experiments". 
The code to run the experiments from Section \ref{sec:robustness-training}, is in the folder `base\_game\_experiments`. 
The code to generate the figures is in the folder `plots`.

An anonymized version of the code is available at \href{https://github.com/68545324/anonymous}{https://github.com/68545324/anonymous}.

\subsection{Hyperparameters}

\begin {table}[H]
\centering
\caption {Main hyperparameters for each cross-play experiment.}
\smallskip
\resizebox{\columnwidth}{!}{%
\begin{tabular}{cccccc}
\textbf{Env. (Algo.)}                     & \textbf{LR}       & \textbf{Episodes} & \textbf{Episode length}    & \textbf{Gamma} & \textbf{LSTM}     \\ \toprule
Matrix games (LOLA-Exact)                   & 1.0               & N/A               & N/A               & 0.96  & N/A           \\ 
Matrix games (amTFT)                        & 0.03              & 800               & 20                & 0.96  & \checkmark             \\ 
CG (LOLA-PG)                               & 0.05              & 2048000           & 40                & 0.90  & \checkmark             \\ 
ABCG (LOLA-PG)                               & 0.1              & 2048000           & 40                & 0.90  & \checkmark             \\ 
CG and ABCG (amTFT)                                 & 0.1               & 4000              & 100               & 0.96  & \checkmark             \\ 
\bottomrule
\end{tabular}
    \label{table:hyperparameters}%
}
\end {table}

All hyperparameters can be found in the code provided as the supplementary material. They are stored in the "params.json" files associated with each replicate of each experiment. The experiments are stored in the "results" folders.

The hyperparameters selected are those producing the most welfare-optimal results when evaluating in self-play (the closest to the optimal profiles associated with each welfare function). Both manual hyperparameter tuning and grid searches were performed.

\section{Additional experimental results}
\label{sec:more-results}
\begin{table}[H]
\caption{Cross-play normalized score of $\amTFT(\WelfareFunctions)$ in ABCG.}
\scriptsize
\centering
\begin{tabular}{cc|c|c|c|}
& \multicolumn{1}{c}{} & \multicolumn{3}{c}{}\\
& \multicolumn{1}{c}{} & \multicolumn{1}{c}{$\{ \welfareUtil \}$}  
& \multicolumn{1}{c}{$\{\welfareIneq\}$} 
&
\multicolumn{1}{c}{$\{\welfareUtil, \welfareIneq\}$} 
  \\\cline{3-5}
\multirow{3}*{}  & $\{ \welfareUtil \}$ &  0.82 & 
    0.12 & 0.69
\\\cline{3-5}
    & $\{ \welfareIneq \}$ 
    & 0.92 & 0.95 & 0.95  \\ \cline{3-5}
    & $\{ \welfareUtil, \welfareIneq \}$ 
    & 0.85 & 0.91 & 0.91  \\ \cline{3-5}
\end{tabular}
\label{tab:payoffs-abcg}
\end{table}

\begin{figure}[H]
    \centering
    %
    \includegraphics[width=130mm,keepaspectratio]{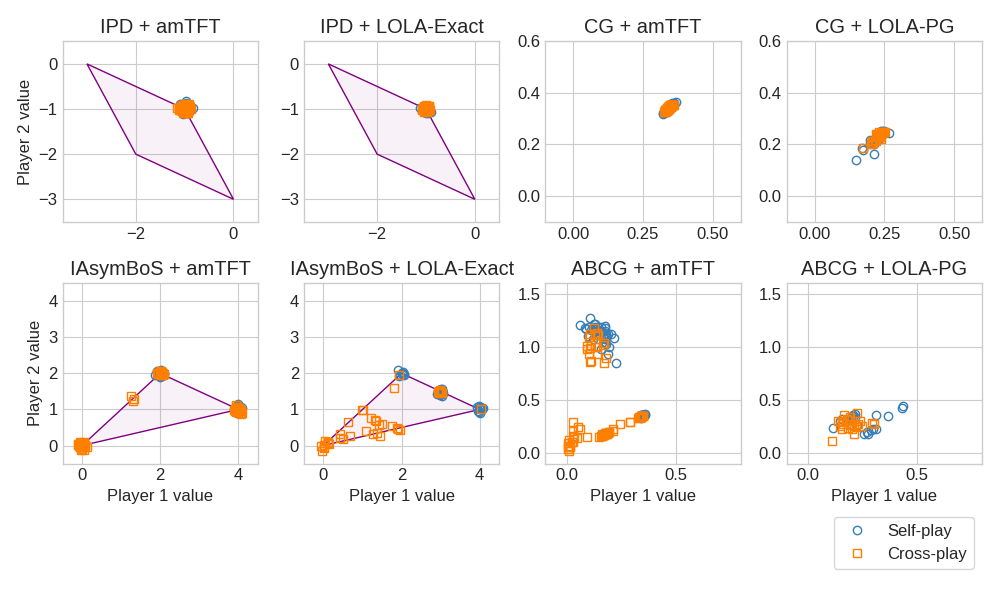}
    \includegraphics[width=130mm,keepaspectratio]{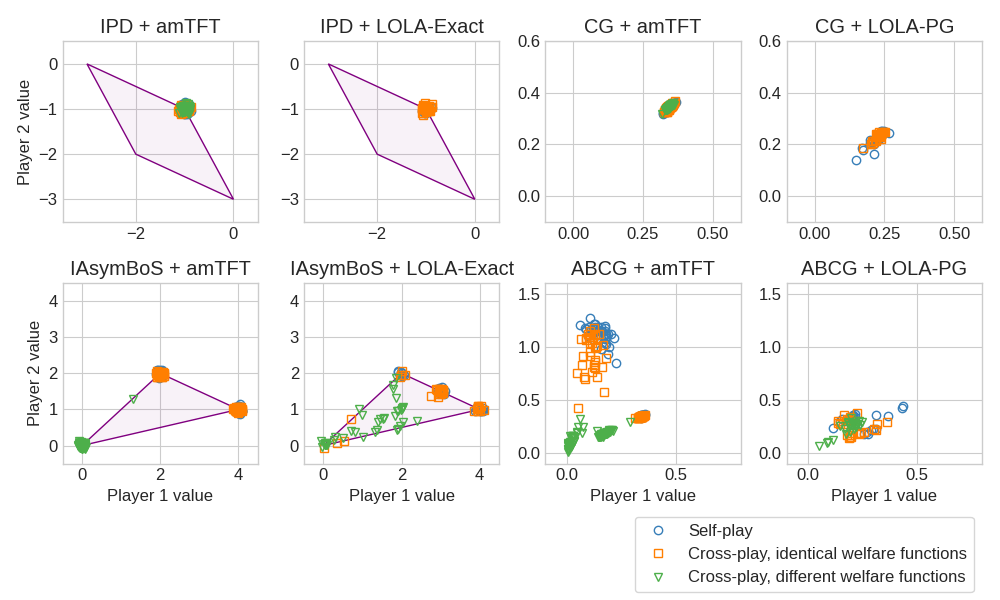}
        \caption{Mean reward of policy profiles for environments and learning algorithms given in Table \ref{tab:envs_and_algos}
        The purple areas describe sets of attainable payoffs. In matrix games, a small amount of jitter is added to the points for improved visibility. The plots on top compare self-play with cross-play, whereas the plots below 
        compare
        cross-play between policies optimizing same and different welfare functions. 
        }
    \label{fig:all_scatter_plots}
\end{figure}

\section{Exploitability-robustness tradeoff}

In this section, we provide some theory on the tradeoff between exploitability and robustness.
Consider some asymmetric BP, with welfare functions \(w,w'
\) optimized in equilibrium, such that for any two policy profiles \(\policyProfile^w,\policyProfile^{w'}\), the cross-play policy profile \((\policyProfile^w_1,\policyProfile^{w'}_2)\) is Pareto-dominated. Note that the definition implies that the welfare functions must be distinct.

We can derive an upper bound for how well players can do under the cross-play policy profile. It follows from the fact that both \(\policyProfile^w\) and \(\policyProfile^{w'}\) are in equilibrium that for \(i=1,2\), it is
\begin{equation}V_i(\policyProfile^w_1,\policyProfile^{w'}_2)\leq \min \{V_i(\policyProfile^w),V_i(\policyProfile^{w'})\}.\end{equation}
This is because, otherwise, at least one of the two profiles cannot be in equilibrium, since a player would have an incentive to switch to another policy to increase their value. 

From the above, it also follows that the cross-play policy must be strictly dominated. To see this, assume it was not dominated. This would imply that one player has equal values under both profiles. So that player would be indifferent, while one of the profiles would leave the other player worse off. Thus, that profile would be weakly Pareto dominated, which is excluded by the definition of a welfare function.

It is a desirable quality for a policy profile maximizing a welfare function in equilibrium to have values that are close to this upper bound in cross-play against other policies. For instance, if some coordination mechanism exists for agreeing on a common policy to employ, it may be feasible to realize this bound against any opponent willing to do the same. 

Moreover, the bound implies that whenever we try to be even more robust against players employing policies corresponding to other welfare functions (e.g., a policy which reaches Pareto optimal outcomes against a range of different policies), our policy will cease to be in equilibrium. In that sense, such a policy will be exploitable, while unexploitable policies can only be robust against different opponents up to the above bound. Note that this holds even in idealized settings, where coordination, e.g., via some grounded messages, is possible.

Lastly, note that if no coordination is possible, or if no special care is being taken in making policies robust, then equilibrium profiles that maximize a welfare function can perform much worse in cross-play than the above upper bound.

We show this formally in a special case in which our POSG is an iterated game, i.e., it only 
has one state. Moreover, we assume that it is an asymmetric BP, and that for both welfare 
functions \(w,w'\) in question, an optimal policy \(\pi^w,\pi^{w'}\) exists that is 
deterministic. Denote \( V_i(t,\pi) \triangleq V_i(H_t,\pi) \)  
(where \(H_t\) denotes the joint observation history up 
to step \(t\)) as the return from time step \(t\) onwards, under the policy \(\pi\). We also 
assume that for any player \(i\), the minimax value
\(\min_{a_{-i}}\max_{a_i}\frac{1}{1-\gamma}r_i(a)\)
is strictly worse than the values of their preferred welfare function maximizing profile. Then we can show that policies maximizing the welfare functions exist such that after some time step, their returns will be upper bounded by their minimax return.

\begin{prop} In the situation outlined above, there exist policices \(\tilde{\pi}^w,\tilde{\pi}^{w'}\) optimizing the respective welfare functions and a time step \(t\) such that
\[V_i(t, \tilde{\pi}_1^w,\tilde{\pi}_2^{w'})\leq \min_{a_{-i}}\max_{a_i}\frac{1}{1-\gamma}r_i(a)\]
for \(i=1,2\).
\end{prop}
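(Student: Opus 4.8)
The plan is to make the grim-trigger intuition from the preceding discussion precise. First I would fix, for each player $i$, a minimax punishment action $\hat{a}_i \in \argmin_{a_i}\max_{a_{-i}} r_{-i}(a_i,a_{-i})$, so that a player committing to $\hat{a}_i$ at every step holds the opponent's continuation value down to $\min_{a_i}\max_{a_{-i}}\frac{1}{1-\gamma}r_{-i}(a)$ regardless of the opponent's play. Using the hypothesised deterministic optimizers $\pi^w,\pi^{w'}$, I would then define $\tilde{\pi}^w$ to coincide with $\pi^w$ as long as the observed sequence of action profiles is a prefix of the (deterministic) self-play trajectory of $\pi^w$, and to switch permanently to $\hat{a}_i$ once any off-path profile has appeared; $\tilde{\pi}^{w'}$ is defined analogously from $\pi^{w'}$. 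In self-play neither profile ever leaves its path, so $\tilde{\pi}^w$ realizes the same trajectory as $\pi^w$ and hence still optimizes $w$ (similarly for $w'$); and the assumption that each player's minimax value is strictly below their value in their preferred welfare-optimal profile makes any one-shot deviation unprofitable, so these are Nash equilibria, as one wants of ``equilibrium profiles maximizing a welfare function''.

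The core step is to show that in the cross-play profile $(\tilde{\pi}_1^w,\tilde{\pi}_2^{w'})$ both players enter punishment mode after finitely many steps. As every component policy is deterministic, cross-play produces one well-defined sequence of action profiles. If player $1$ never left its $\pi^w$-path, then at every step player $2$ must have played exactly $\pi_2^w$'s on-path action, so the realized trajectory would equal the self-play trajectory of $\pi^w$ and hence $V_i(\tilde{\pi}_1^w,\tilde{\pi}_2^{w'}) = V_i(\pi^w)$ for $i=1,2$; since $\pi^w$ is Pareto-optimal, this contradicts the standing assumption that the cross-play profile is strictly Pareto-dominated. So player $1$ first observes an off-path profile at some finite step $t_1$, and because player $1$'s own actions were on-path through $t_1$, the discrepancy lies in player $2$'s action, after which $\tilde{\pi}_1^w$ plays $\hat{a}_1$ forever. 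The symmetric argument applied to $\pi^{w'}$ produces a finite step $t_2$ at which player $2$ detects an off-path action -- necessarily player $1$'s -- after which $\tilde{\pi}_2^{w'}$ plays $\hat{a}_2$ forever.

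To finish I would take $t = \max\{t_1,t_2\}+1$. For every step $s > t_1$ player $1$ plays $\hat{a}_1$, so each of player $2$'s per-step rewards is at most $\max_{a_2}r_2(\hat{a}_1,a_2) = \min_{a_1}\max_{a_2}r_2(a)$; summing the discounted rewards from step $t$ onward (using $t > t_1$) gives $V_2(t,\tilde{\pi}_1^w,\tilde{\pi}_2^{w'}) \le \frac{1}{1-\gamma}\min_{a_1}\max_{a_2}r_2(a)$, the required bound for $i=2$. Symmetrically, player $2$ playing $\hat{a}_2$ at every step after $t_2$ caps each of player $1$'s per-step rewards (from step $t$ on) by $\min_{a_2}\max_{a_1}r_1(a)$, so $V_1(t,\tilde{\pi}_1^w,\tilde{\pi}_2^{w'}) \le \frac{1}{1-\gamma}\min_{a_2}\max_{a_1}r_1(a)$. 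Because $t$ exceeds both $t_1$ and $t_2$, both bounds hold at the same time step, which is exactly the statement.

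The step I expect to be the main obstacle is ensuring that both players -- not merely one -- are eventually driven off their paths: a plain grim trigger only guarantees that some divergence occurs somewhere. The resolution is the observation that the cross-play trajectory must differ from each of the two self-play trajectories (both Pareto-optimal, whereas cross-play is strictly dominated), together with the remark that a player who has not yet deviated can be knocked off-path only by the opponent's move; making the punishments permanent then lets the two minimax bounds be combined at the single time $t$. Minor care is also needed to handle ties when choosing $\hat{a}_i$ and to confirm that the self-play argument really pins $\tilde{\pi}^w$ down as a maximizer of $w$.
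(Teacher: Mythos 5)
Your proof is correct and takes essentially the same route as the paper: replace the welfare-optimal policies by grim-trigger versions with minimax punishment, show that in cross-play both players must eventually trigger, and then bound each continuation value from some step $t$ onward by the discounted minimax value. The only minor difference is how the second trigger is obtained — you run the Pareto-domination contradiction symmetrically for each player's on-path trajectory, whereas the paper argues that once one player is being punished its continuation value drops below its preferred-profile value, forcing the punisher's actions off that player's expected path — and both arguments are sound.
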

\begin{proof}
Define \(\tilde{\pi}^w\) as the policy profile in which player \(i\) follows \(\pi_i^w\), unless the other player's actions differ from \(\pi_{-i}^{w}\) at least once, after which they switch to the action
\(a_{i} \triangleq \argmin_{a_{i}}\max_{a_{-i}}r_{-i}(a)\).
Define \(\tilde{\pi}^{w'}\) analogously. Note that both profiles are still optimal for the corresponding welfare functions.

As argued above, the cross-play profile \((\pi_1^w,\pi_2^{w'})\) must be strictly worse than their preferred profile, for both players. So there is a time \(t'\) after which an action of a player \(i\) must differ from \(-i\)'s prefered profile and thus \(-i\) switches to the minimax action \(a_{-i}\). We have \(V_i(t',\pi_i^w,a_{-i})\leq \min_{a'_{-i}}\max_{a'_i}\frac{1}{1-\gamma}r_i(a')\), i.e., the value \(i\) gets after step \(t'\) must be smaller than their minimax value, and by assumption, the minimax value is worse for \(i\) than the value of their preferred welfare-maximizing profile. Hence, there must be a time step \(t-1\geq t'\) after which \(i\) also switches to their minimax action.

From \(t\) onwards, both players play \(a\), so
\[V_i(t,\tilde{\pi})
= \frac{1}{1-\gamma}r_i(a)\leq \max_{a'_i}\frac{1}{1-\gamma}r_i(a_{-i},a'_i)=
\min_{a'_{-i}}\max_{a'_i}\frac{1}{1-\gamma}r(a').\]
\end{proof}

\end{document}